\newtheorem{theorem}{Theorem}
\newtheorem{lemma}{Lemma}
\newcommand{\PRLsection}[1]{\emph{#1.---}}
\newcommand{\bs}[1]{\boldsymbol{#1}}
\newcommand{\bepsilon}{{\bs \epsilon}}
\newcommand{\ti}{\text{i}}
\newcolumntype{Y}{>{\centering\arraybackslash}X}
\def\ud{{\rm d}}
\begin{document}

\title{Well-posed initial value formulation of general effective field theories of gravity}

\author{Pau Figueras}%
\email{p.figueras@qmul.ac.uk}
\affiliation{School of Mathematical Sciences, Queen Mary University of London, Mile End Road, London E1 4NS, United Kingdom}%

\author{Aaron Held}
\email{aaron.held@phys.ens.fr}
\affiliation{Institut de Physique Th\'eorique Philippe Meyer, Laboratoire de Physique de l’\'Ecole normale
sup\'erieure (ENS), Universit\'e PSL, CNRS, Sorbonne Universit\'e, Universit\'e Paris Cit\'e,
F-75005 Paris, France}%

\author{\'Aron D. Kov\'acs}%
\email{a.kovacs@qmul.ac.uk}
\affiliation{School of Mathematical Sciences, Queen Mary University of London, Mile End Road, London E1 4NS, United Kingdom}%


\begin{abstract}
We provide a proof that \emph{all} polynomial higher-derivative effective field theories of vacuum gravity admit a well-posed initial value formulation when augmented by suitable regularising terms. 
These regularising terms can be obtained by field redefinitions and allow to rewrite the resulting equations of motion as a system of second-order nonlinear wave equations.
For instance, our result applies to the quadratic, cubic, and quartic truncations of the effective field theory of gravity that have previously appeared in the literature. 
The regularising terms correspond to fiducial massive modes, however, their masses can be chosen to be non-tachyonic and heavier than the cutoff scale and hence these modes should not affect the dynamics in the regime of validity of effective field theory.
Our well-posed formulation is not limited to the weakly coupled regime of these theories, is manifestly covariant and does neither require fine tuning of free parameters nor involves prescribing arbitrary equations.  
\end{abstract}

\pacs{}

\maketitle
\thispagestyle{fancy}

\PRLsection{Introduction}%
The detections of gravitational waves produced in mergers of compact objects have offered the possibility of testing general relativity (GR) in a new highly dynamical and strong field regime of the theory \cite{LIGOScientific:2014pky,VIRGO:2014yos,KAGRA:2020tym,LIGOScientific:2016aoc}. Whilst up to now GR has passed all tests, the latter will become more stringent in the coming years as the sensitivity of the current detectors increases and new generations of detectors are incorporated to the network. 
Current tests are consistency tests which assume that the underlying theory of gravity is given by GR. Clearly, it would be desirable to also compare with predictions that account for possible deviations.

Without much reason to prefer a specific alternative theory of gravity, effective field theory (EFT) provides a systematic and agnostic framework to test for potential modifications. 
EFT is a powerful tool that is widely used in many different areas of physics (see for instance~\cite{Burgess:2020tbq}). In this approach, one parametrises the low-energy effects of the (unknown) high-energy degrees of freedom as a derivative expansion of the low-energy fields, allowing for all possible terms that are compatible with the symmetries, typically up to field redefinitions. The higher order terms in the expansion are suppressed by the ultraviolet (UV) length scale at which new physics is expected to appear and the unknown coefficients can in principle be determined (or constrained) from observation~\cite{Payne:2024yhk}. In the case of gravity, the Einstein-Hilbert Lagrangian is thought of as the leading order term in a series expansion in derivatives of the metric, which are typically packaged in scalars of the curvature tensor and its covariant derivatives.  
In cases where a UV completion of GR is known,
the derivative expansion can be calculated, from first principles, in a suitable low energy limit of the UV theory. The precise values of the EFT coefficients reflect the details of the UV completion. 

In general, the equations of motion that arise in truncations of the EFT are higher than second order and it is not immediately obvious how to find a well-posed initial value formulation.
The latter, however, is crucial to extract strong-field predictions of the theory.
Two generic but approximate approaches have been proposed in the literature.
In a first approach, commonly referred to as the `order-by-order' approach, one bypasses any well-posedness issues by iterating in a perturbative expansion at the level of the equations of motion~\cite{Witek:2018dmd,Okounkova:2019dfo,GalvezGhersi:2021sxs}. It turns out that the accuracy of the order-by-order approach does not only depend on the smallness of the EFT couplings but also on secular effects which accumulate over time~\cite{Okounkova:2019zjf,Okounkova:2020rqw}.
In a second approach, inspired by the M\"{u}ller-Israel-Stewart formulation of relativistic viscous hydrodynamics~\cite{Muller:1967aa,Israel:1976213,Israel:1979wp} and commonly known as the `fixing of the equations' approach, one modifies the theory by introducing auxiliary variables and corresponding (ad-hoc) evolution equations, to improve the UV behavior of the theory~\cite{Cayuso:2017iqc,Allwright:2018rut}, see~\cite{Cayuso:2020lca,Cayuso:2023xbc,Lara:2024rwa,Corman:2024cdr} for respective strong-field simulations and~\cite{Gerhardinger:2022bcw,deRham:2023ngf} for related approaches.
The auxiliary variables are introduced for certain derivatives of the metric, thus reducing the order of the equations. Then their prescribed evolution equations drive the auxiliary variables towards their `physical' values on a timescale that, ideally, is much shorter than any other physical timescale in the problem. 
However, the `fixing of the equations' method still suffers from some drawbacks. For instance, it is not covariant and it is unclear what class of modifications of the equations leaves the physical content of the theory intact. (See \cite{Geroch:1995bx,Lindblom:1995gp} for the corresponding conditions in fluid dynamics.) Moreover, the `fixing of the equations' approach involves some free parameters that have to be tuned, adding to the computational costs of the simulations. 
In contrast to the above two generic but approximate approaches, for certain classes of theories, it has been possible to find a well-posed initial value formulation~\cite{Noakes:1983xd,Kovacs:2020pns,Kovacs:2020ywu,AresteSalo:2022hua,AresteSalo:2023mmd}, and subsequently extract the strong field predictions of the theory with numerical simulations~\cite{Held:2021pht,Held:2023aap,Cayuso:2023xbc}. 
In these cases, one can solve the full equations of motion without further approximation.
A recent comparison~\cite{Corman:2024cdr} of the generic but approximate approaches to the full evolution of shift-symmetric Einstein-scalar-Gauss-Bonnet gravity quantifies potential sources of error.
Clearly, it is highly desirable to extend the class of theories for which one can evolve the full equations of motion, i.e., for which there exists a well-posed formulation without further approximation.

In this letter we present a systematic approach to `regularise' any polynomial truncation of the EFT such that the full equations of motion are well-posed. We do so by means of suitable field redefinitions and prove that the resulting initial value problem is well posed. 
Our approach does neither require the ad-hoc introduction of arbitrary equations nor fine-tuning of free parameters. Moreover, it is manifestly covariant. The equations of motion propagate a fiducial tower of massive modes, similar to the case of well-posed evolution in quadratic gravity~\cite{Stelle:1976gc,Noakes:1983xd}. At the end of this letter we argue that the masses of these modes can always be chosen to be heavier than the cutoff scale.

We use the notation and conventions as in Wald's book \cite{Wald:1984rg}. Latin letters $a,b,c,\ldots$ denote abstract indices while we use Greek letters $\mu,\nu,\ldots$ to denote spacetime coordinate indices and they run from 0 to $D-1$ in a $D$-dimensional spacetime; Latin letters $i,j,\ldots$ denote indices along a spacelike hypersurface. We use geometric units $G=c=1$ and the mostly-plus sign convention for the spacetime metric. 

\PRLsection{Preliminaries}%
We consider an EFT of vacuum gravity with a Lagrangian of the form
\begin{equation}
    \mathbf{L} = \bepsilon\,\mathcal{L}\,,\quad \mathcal{L}=\Big[R+\sum_{m\geq 2}\ell^m \mathcal{L}_m\Big]\,,
    \label{eq:lagrangian}
\end{equation}
where $\bepsilon$ is the volume form, $\mathcal{L}_m$ is a local covariant scalar of dimension $m+2$ depending only on the metric and its derivatives (and $\bepsilon$ in parity-violating theories), and $\ell$ is a length scale associated with UV physics \footnote{Note that we have set the cosmological constant to zero and we assumed that all higher derivative terms are suppressed by a single UV scale for simplicity.}. The initial value problem for generic theories of the form \eqref{eq:lagrangian} is not expected to be locally well-posed. However, we will show that we can make progress within the regime of validity of EFT. 

Let $L$ be the shortest characteristic length scale associated to a solution of \eqref{eq:lagrangian}. Such a solution is expected to be accurately described by EFT when $\ell/L\ll 1$ \footnote{For instance, in a black hole binary $L$ would correspond to the size of the smallest black hole.}. At low energies, we may truncate the infinite series \eqref{eq:lagrangian} at some finite order $N$ and work with the truncated theory, $\mathcal{L}_N^{(tr)}$. Local well-posedness of the Cauchy problem for a system of PDEs {\it generically} depends on the highest order derivatives in the system. However, as we shall see, for {\it some} PDE systems local well-posedness may also be sensitive to {\it some} (but not all) of the lower order derivative terms. Hence, to solve the initial value problem for the truncated theory we employ the following strategy. We will show that by adding a sum of suitable operators $\mathcal{L}_{\text{reg},m}$ to the truncated Lagrangian (with $m\leq m_{\text{max}}$ and a sufficiently large $m_{\text{max}}>N$), the modified (augmented) theory admits a locally well-posed initial value problem. At the same time, it is expected that the augmented theory makes the same physical predictions as the original truncated theory up to order $N$.

In particular, we consider adding to $\mathcal{L}_N^{(tr)}$ regularising Lagrangians of the form
\begin{equation}
\mathcal{L}_{\text{reg}}\equiv \sum\limits_{ k =0}^n\mathcal{L}_{\text{reg},k} = \sum\limits_{k=0}^n\ell^{2k+2}\left(\alpha_k \,R^{ab}\Box^k R_{ab}-\beta_k \,R\Box^k R \right) \label{eq:reg_lagr}
\end{equation}
for some $n$ and where $\alpha_k$, $\beta_k$ are constants. These terms in the Lagrangian can be generated by field redefinitions proportional to (derivatives of) the zeroth order equations of motion \footnote{
It is also possible to regularise the EFT at order $N$ without performing any field redefinitions. In this case, the regularising terms up to $k\leq N-2$ are already contained in the respective EFT truncation. The regularising terms with $N-2< k\leq N$ can be added at will since they are of yet higher order in the EFT and thus should not affect corrections obtained consistently at the current order. In particular, we may choose $2\alpha_n=\beta_n$ to ensure well-posedness.
}. 
Therefore, we will consider the initial value problem in the augmented theory
\begin{equation}
    \mathcal{L}_{\text{aug}}=\mathcal{L}_N^{(\text{tr})}+\mathcal{L}_{\text{reg}}\,. \label{eq:aug_lagr}
\end{equation}

In the next section we will show that well-posedness of the Cauchy problem for the augmented theory will only depend on the highest order term in the regularising Lagrangian, i.e., the piece corresponding to $k=n$ in \eqref{eq:reg_lagr}, provided that $n$ is large enough. The lower order terms in the regularising Lagrangian are included only to have better control over the particle content of the augmented theory \footnote{Note that in the EFT regime, we can make field redefinitions recursively to ensure that the derivative structure of the theory up to order $N$ is the same as in the original truncated theory.}. In the Supplemental Material we illustrate these ideas in the case of the toy problem $\Box^n u = F$.

\PRLsection{Main results}
In this section we show that (for large enough $n$) the equation of motion that result from varying the augmented Lagrangian, $\mathbf{L}_\text{aug}=\bepsilon\,\mathcal{L}_\text{aug}$, with respect to the metric can be written as a system of second order nonlinear wave equations, and hence local well-posedness of the Cauchy problem is manifest. This is our main result and it is expressed in Theorem \ref{thm:short}; the intermediate technical results are relegated to the Supplemental Material. 

We begin by introducing a set of tensor fields $R^{(p,q)}$ and $G^{(p,q)}$, which are  of type $(0,p+2)$, ${\bar R}^{(p,q)}$, which are of type $(0,p)$, and  $W^{(k)}$, which are of type $(0,k+4)$, defined as follows
\begin{align}
R^{(p,q)}_{abc_1\ldots c_p} &\equiv \nabla_{c_1}\ldots \nabla_{c_p} \Box^q R_{ab}\,, \\
G^{(p,q)}_{abc_1\ldots c_p} &\equiv \nabla_{c_1}\ldots \nabla_{c_p} \Box^q G_{ab}\,, \\
{\bar R}^{(p,q)}_{c_1\ldots c_p} &\equiv \nabla_{c_1}\ldots \nabla_{c_p} \Box^q R\,, \\
W^{(k)}_{abcd e_1\ldots e_k} &\equiv \nabla_{e_1}\ldots \nabla_{e_k} W_{abcd}\,,
\end{align}
where $R_{ab}$, $G_{ab}$ and $W_{abcd}$ are the Ricci, Einstein and Weyl tensors respectively associated to the spacetime metric $g_{ab}$ (and $R$ the Ricci scalar). Our main result can be stated as follows:
\begin{theorem}\label{thm:short}
    Assume that there exists a non-negative integer $n$ such that the equation of motion ${\cal E}^{ab}_N$ of the truncated theory $\mathcal{L}_N^{(\text{tr})}$ can be written as
    \begin{equation}
        {\cal E}^{ab}_N=F^{ab}_N \,,\label{eq:eoms_Ltr}
    \end{equation}
    where $F^{ab}_N$ is a sum of monomials such that each monomial is a product of an appropriate power of $\ell$, a coupling constant and contractions of the metric, the volume form (in case of a parity-violating theory) and a subset of the following tensor fields
        \begin{itemize}
            \item $W^{(k)}$ with $0\leq k \leq n$,
            \item ${R}^{(p,q)}$ with $p+q\leq n+1$ and $q\leq n-1$.
        \end{itemize}
    Suppose further that we employ the regularising theory with $\alpha_n=2\beta_n$. Then the augmented theory $\mathcal{L}_{\text{aug}}$ admits a well-posed initial value formulation.
\end{theorem}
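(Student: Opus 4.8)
The plan is to show that the full equation of motion derived from $\mathcal{L}_{\text{aug}}$ can, after a field redefinition of the metric by lower-order curvature terms, be recast as a \emph{second-order} quasilinear system of wave equations for an enlarged set of variables, so that standard local well-posedness results for such systems (as in the treatment of quadratic gravity in Refs.~\cite{Noakes:1983xd,Stelle:1976gc}) apply directly. The key structural observation is that the highest-derivative part of the equation of motion comes entirely from varying $\mathcal{L}_{\text{reg},n} = \ell^{2n+2}(\alpha_n R^{ab}\Box^n R_{ab} - \beta_n R\Box^n R)$. Varying this term produces, schematically, $\ell^{2n+2}\big(2\alpha_n \Box^{n+1} R_{ab} - 2\beta_n g_{ab}\Box^{n+1}R - 2\beta_n \nabla_a\nabla_b \Box^n R + \ldots\big)$ plus terms with fewer than $2n+2$ derivatives beyond $R_{ab}$. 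The choice $\alpha_n = 2\beta_n$ is precisely what is needed so that the principal symbol of this operator, acting on the trace and trace-free parts of the metric perturbation, is a pure power of the d'Alembertian $\Box^{n+1}$ times an invertible algebraic factor — this is the analogue of the gauge-fixing miracle in quadratic gravity, and I would verify it by linearising around an arbitrary background in a suitable (generalised harmonic) gauge and computing the characteristic polynomial, checking it has only the physical light-cone as its characteristics with the correct multiplicity.

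Next I would introduce the auxiliary tensor fields listed before the theorem, $R^{(p,q)}$, $G^{(p,q)}$, $\bar R^{(p,q)}$ and $W^{(k)}$, as independent variables subject to the defining differential constraints $\Box\, R^{(p,q-1)} \sim R^{(p,q)} + (\text{lower})$ etc., thereby trading the single high-order equation for a chain of coupled second-order equations. The hypothesis on $F^{ab}_N$ is exactly the statement that the right-hand side of \eqref{eq:eoms_Ltr} — i.e., the contribution of the genuine EFT truncation — only involves curvature derivatives that have already been promoted to variables in this finite list ($W^{(k)}$ with $k\le n$ and $R^{(p,q)}$ with $p+q\le n+1$, $q\le n-1$), so that $F^{ab}_N$ becomes an algebraic (nonlinear, but $\ell$-suppressed) source built from the new variables and the metric, with \emph{no} derivatives falling outside the closed system. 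One then checks that the evolution equations for each auxiliary variable, obtained by commuting derivatives through the Bianchi identities and the regularised equation of motion, close: every time a $\Box$ or $\nabla$ hits a variable at the ``top'' of the hierarchy, the regularised equation of motion (with $\alpha_n=2\beta_n$) lets one solve algebraically for the would-be higher object in terms of the existing ones. The bound $q\le n-1$ on the right-hand side is what guarantees this closure has room to work, since the equation of motion itself supplies the relation at $q=n+1$.

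The main obstacle I anticipate is verifying that the \emph{entire} coupled system — not merely the leading $\Box^{n+1}$ piece — is quasilinear of genuinely second order with principal part diagonal in $\Box$ (or at least with characteristics confined to the metric light-cone), including the constraint-propagation sector and the gauge sector. In particular one must check: (i) that gauge-fixing (generalised harmonic coordinates) can be imposed consistently and that the gauge constraints propagate; (ii) that the definitions of the auxiliary fields are preserved under evolution, i.e.\ the constraint manifold is an invariant set, which typically requires showing the constraint violations themselves satisfy a homogeneous second-order wave system; and (iii) that no term in $F^{ab}_N$ or in the variation of the lower-order regulators $\mathcal{L}_{\text{reg},k}$ with $k<n$ reintroduces a derivative of order higher than two in the \emph{new} variables — this is where the precise index bounds in the theorem hypothesis are essential and must be matched term by term. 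Once these are established, the resulting system has the form $g^{\mu\nu}(u)\partial_\mu\partial_\nu u = S(u,\partial u)$ with $g^{\mu\nu}$ a Lorentzian metric, and local well-posedness in suitable Sobolev spaces follows from the standard theory; I would close by remarking that this is exactly the structure announced, with the fiducial massive modes appearing as the extra characteristics of the reduced variables at the masses set by $\alpha_n,\beta_n$ and the lower regulator coefficients.
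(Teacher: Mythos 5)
Your overall architecture matches the paper's: promote the curvature derivatives $W^{(k)}$, $R^{(p,q)}$ to independent variables, evolve the metric in harmonic gauge, obtain a diagonal second-order quasilinear wave system, and close with constraint propagation for the gauge condition, the algebraic definitions of the auxiliary fields, and the Bianchi-type constraints; the hypothesis on $F^{ab}_N$ plays exactly the role you assign to it. However, there is a genuine gap at the single most important step --- how the top-order operator is actually diagonalised --- and the verification you propose would not close it. You attribute the condition $\alpha_n=2\beta_n$ to the characteristic polynomial being a power of $\Box$ ``with the correct multiplicity''. That criterion is both insensitive to the ratio $\alpha_n/\beta_n$ (the characteristics are the metric light cone for generic coefficients) and insufficient: a repeated factor of $\Box$ is precisely the weakly hyperbolic situation in which well-posedness is \emph{not} determined by the principal symbol. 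The paper's toy example $\Box^2u=-\epsilon\,\partial_x^3u$ has the same characteristic polynomial as $\Box^2u=0$ yet is ill-posed, so computing the characteristic polynomial in generalised harmonic gauge establishes nothing here.

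The mechanism your write-up is missing is covariant and rests on the Bianchi identity rather than a gauge choice. With $\alpha_n=2\beta_n$ the top regulator is proportional to $R^{ab}\Box^nG_{ab}$, and its equation of motion has principal part equal to the linearised Einstein operator acting on $G^{(0,n)}_{ab}$, namely $\Box G^{(0,n)}_{ab}+g_{ab}\nabla^c\nabla^dG^{(0,n)}_{cd}-2\nabla^c\nabla_{(a}G^{(0,n)}_{b)c}$. The identity $\nabla^bG^{(0,n)}_{ab}=I^{(n)}_a$, with $I^{(n)}$ built only from $W^{(k)}$, $k\leq n-1$, and $R^{(p,q)}$, $p+q\leq n$, $q<n$ (obtained by repeatedly commuting $\Box$ past the contracted Bianchi identity), is what allows the two off-diagonal terms to be traded for lower-order data already in the variable list, leaving the diagonal equation $\Box R^{(0,n)}_{ab}=F_{ab}$. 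For $\alpha_n\neq2\beta_n$ the divergence of the relevant combination is not lower order, the $\nabla\nabla$ terms cannot be removed, and the reduced system fails to be diagonal, so your ``solve algebraically at the top of the hierarchy'' step would break down. Beyond this, your items (i)--(iii) are listed as obstacles rather than resolved --- the homogeneous wave system for the constraints and, in particular, the construction of admissible initial data for the $(2n+5)$-tuple of fields via the identities $\nabla^\nu G^{(0,l)}_{\mu\nu}=I^{(l)}_\mu$ require real work --- but those parts of your plan do point in the right direction.
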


Note that the assumptions of Theorem \ref{thm:short} are not too restrictive since the equations of motion of any theory with a Lagrangian of the form \eqref{eq:lagrangian} can be written as in  \eqref{eq:eoms_Ltr} provided that $n$ is sufficiently large. For example, in $D=4$ spacetime dimensions Theorem \ref{thm:short} applies to the most general $6$-derivative vacuum EFT after field redefinitions (see e.g., equation (2.5) of \cite{Endlich:2017tqa}) with $n=1$ and it applies to the $8$-derivative theory after field redefinitions (e.g., equation (1.1) of \cite{Endlich:2017tqa}) with $n=2$. In $D>4$ dimensions Theorem \ref{thm:short} applies to the most general $4$-derivative vacuum EFT after field redefinitions (Einstein-Gauss-Bonnet theory) with $n=0$. 

\begin{proof}

\noindent
{\bf Geometric identities.} We start by stating two sets of useful geometric identities involving the tensor fields $W^{(k)}$ and $R^{(p,q)}$. The first set of identities are wave equations of the form
   \begin{align}
       \Box W^{(k)}&=F^{(k)}_W \,,\label{eq:wave_W}\\
       \Box R^{(p,q)}&=R^{(p,q+1)}+F^{(p,q)}_R \,,\label{eq:wave_R}
   \end{align}
such that $F_{(k)}^W$ is a type $(0,k)$ tensor-valued polynomial in the variables $W^{(l)}$ with $0\leq l\leq k$, and $R^{(s,0)}$ with $0\leq s\leq k+2$; $F_{(p,q)}^R$ is a type $(0,p+2)$ tensor-valued polynomial in the variables $W^{(l)}$ with $0\leq l\leq p-1$, and $R^{(s,t)}$ with $s+t\leq p+q$, $t\leq q$. The second set of identities can be stated as follows. Let $n$ be a positive integer; then 
   \begin{equation}
       \nabla^bG_{ab}^{(0,n)}=I^{(n)}_a \,,  \label{eq:gradGn}
   \end{equation}
where $I^{(n)}_a$ can be expressed as a sum of monomials built from contractions of the metric and the tensor fields $W^{(k)}$ with $0\leq k \leq n-1$, $R^{(p,q)}$ with $p+q\leq n$ and $q<n$. These identities are straightforward to verify by induction and using the Bianchi and Ricci identities, see the Supplemental Material.

\noindent
{\bf Equation of motion of $\mathcal{L}_{\text{reg.}}$.} 
Consider first the theory ${\cal L}_{\text{reg.},n}$ with $\alpha_n=2\beta_n$. We claim that the equation of motion of this theory has the form
 \begin{equation}
    E^{ab}_{(n)}\equiv \Box G^{(0,n)}_{ab}+g_{ab}\nabla^c\nabla^d G^{(0,n)}_{cd}-2\nabla^c \nabla_{(a}G^{(0,n)}_{b)c}+\ldots \label{eq:eom_spec_reg}
\end{equation}
where the ellipsis stands for a sum of monomials built out of contractions of the metric and exactly two factors of the form $R^{(p,q)}$ with $p\leq 2$ and $q\leq n-1$. This statement can be verified by using induction on $n$, see Supplemental Material.

\noindent
The principal terms explicitly displayed in \eqref{eq:eom_spec_reg} have a non-diagonal structure. However, by using the identity \eqref{eq:gradGn}, it is possible to remove the second and third term on the RHS of \eqref{eq:eom_spec_reg} in favor of terms containing lower order derivatives of the curvature tensors. Equation \eqref{eq:eom_spec_reg} then reduces to (after trace-reversing)
\begin{equation}
        \Box R_{ab}^{(0,n)}=F_{ab} \label{eq:boxR_n}\,,
\end{equation}
where $F_{ab}$ is a sum of monomials built out of factors of the metric, $W^{(l)}$ with $0\leq l \leq n$, and $R^{(p,q)}$ with $p+q\leq n+1$ and $q\leq n-1$.

\noindent
Consider now the theory ${\cal L}_{\text{reg.},k}$ with $\alpha_k=0$ and $\beta_k\neq 0$. One can verify by induction on $k$ (see Supplemental Material) that the equation of motion of this theory can be written as
 \begin{equation}
    {\bar E}^{ab}_{(k)}\equiv 2\nabla^a\nabla^b {\bar R}^{(0,k)}-2\Box {\bar R}^{(0,k)}g^{ab}+\ldots \label{eq:eom_scalar_reg}
\end{equation}
where the ellipsis stands for a sum of monomials such that each monomial contains (in addition to contractions w.r.t. the metric) exactly two factors of the form ${\bar R}^{(p,q)}$ with $p\leq 2$ and $q\leq k-1$.
 
\noindent
Next, we observe that the equation of motion of the regularising Lagrangian \eqref{eq:reg_lagr} with $\alpha_n=2\beta_n$ (and $\alpha_k$, $\beta_k$ with $k<n$ left unspecified) can be written as a linear combination of the equations $E^{ab}_{(k)}$ with $k\leq n$ and ${\bar E}^{ab}_{(k)}$ with $k\leq n-1$. Moreover, it follows from the structure of $E^{ab}_{(k)}$ and ${\bar E}^{ab}_{(k)}$ that the inclusion of the lower order regularising terms retains the same structure as \eqref{eq:boxR_n}. Finally, consider a theory whose equation of motion has the structure described in the theorem and modify its Lagrangian by adding \eqref{eq:reg_lagr} with $\alpha_n=2\beta_n$. Then the equation of motion of the augmented theory can be expressed in the form \eqref{eq:boxR_n} ($F_{ab}$ will also depend on $\bepsilon$ in a parity-violating theory).

\noindent
{\bf Proof of well-posedness.} Now we are in the position to consider the initial value problem for PDEs of the form \eqref{eq:boxR_n}. Note that well-posedness of the Cauchy problem for \eqref{eq:boxR_n} is not immediately obvious because PDEs of the form $\Box^n u=F$ are only weakly hyperbolic, which means that well-posedness depends on some of the lower order derivatives, see the Supplemental Material. Nevertheless, we shall show that the structure of \eqref{eq:boxR_n} described above is such that it admits a well-posed Cauchy problem. We prove this in two steps. First, we demonstrate that an equation of the form \eqref{eq:boxR_n} can be reduced to a diagonal system of second order wave equations. Second, we show that the constraints resulting from such order reduction are propagated, i.e., solutions to the order-reduced system solve \eqref{eq:boxR_n}.

\noindent
Let $\Sigma$ be a codimension 1 spacelike manifold and let us introduce coordinates $x^i$ on it. An initial data set (corresponding to $x^0=0$) consists of the $(2n+5)$-tuple $(\Sigma,\gamma_{ij},K_{ij},{\rho}_{ij}^{(0)},\ldots {\rho}_{ij}^{(2n+1)})$ where $\gamma_{ij}$ is a Riemannian metric on $\Sigma$, $K_{ij}$ and $\rho^{(m)}_{ij}$ ($0\leq m\leq 2n+1$) are symmetric tensors on $\Sigma$ corresponding to the extrinsic curvature and $(\pounds_n)^m R_{ij}$ respectively, where $\pounds_n$ denotes de Lie derivative along the vector $n^\mu$. Next, we make an arbitrary choice for the initial value of the lapse function $\alpha$ and the shift vector $\beta^i$ (which fixes the unit normal $n^\mu$ on $\Sigma$) and choose $g_{ij}$ and $\partial_0 g_{ij}$ on $\Sigma$ so that the induced metric and the extrinsic curvature of $\Sigma$ are given by $\gamma_{ij}$ and $K_{ij}$ respectively. The first order time derivatives of $\alpha$ and $\beta^i$ can be fixed by assuming that the harmonic gauge condition holds on $\Sigma$. Initial data for the Ricci tensor is prescribed by setting $R_{ij}=\rho^{(0)}_{ij}$ on $\Sigma$ and by requiring that $R_{0\mu}$ matches its coordinate expression written in terms of $\gamma_{ij}$, $K_{ij}$ (and their derivatives parallel to $\Sigma$). For later convenience, we note that the Weyl tensor can also be computed on $\Sigma$ from $g_{\mu\nu}$, $\partial_0 g_{\mu\nu}$, their derivatives parallel to $\Sigma$, and $R_{\mu\nu}$. Similarly, $(\pounds_n)^k W|_\Sigma$ can be expressed in terms of the initial values of $g_{\mu\nu}$, $\partial_0 g_{\mu\nu}$ and $(\pounds_n)^m R_{\mu\nu}$ with $m\leq k$ and their derivatives parallel to $\Sigma$. The initial values of time derivatives of $R_{\mu\nu}$ are chosen as follows. For $\mu\nu=ij$, we set $(\pounds_n)^m R_{ij}=\rho^{(m)}_{ij}$. Initial data for $(\pounds_n)^m R_{0\mu}$ can then be determined inductively. Requiring that the contracted Bianchi identity $ \nabla^\nu G_{\mu\nu}=0$ holds on $\Sigma$ uniquely fixes the first order time derivative of $R_{0\mu}$ in terms of the data for the metric and the Ricci tensor (and derivatives parallel to $\Sigma$). Similarly, the second order time derivative of $R_{0\mu}$ is obtained using the time derivative of the contracted Bianchi identity
    \begin{equation}
       \pounds_n (\nabla^\nu G_{\mu\nu})=0\,.
    \end{equation}   
Now suppose we have already obtained data for $(\pounds_n)^m R_{0\mu}$ with $m\leq 2l$ and $1\leq l\leq n$ an integer. Consider the identity (cf. \eqref{eq:gradGn}).
    \begin{equation}
        \nabla^\nu G_{\mu\nu}^{(0,l)}=I^{(l)}_\mu \,.\label{eq:dt_Ricci_fix_1}
    \end{equation}
The RHS can be written in terms of $g_{\mu\nu}$, $\partial_0 g_{\mu\nu}$, $(\pounds_n)^k R_{\mu\nu}$ with $k\leq 2l-1$ and their derivatives parallel to $\Sigma$. Hence, \eqref{eq:dt_Ricci_fix_1} algebraically fixes $(\pounds_n)^m R_{0\mu}$ on $\Sigma$ in terms of data previously obtained. For even values of $m=2l+2$, i.e., $1\leq l\leq n-1$, we can similarly fix $(\pounds_n)^m R_{0\mu}$ by enforcing
\begin{equation}
        \pounds_n\left(\nabla^\nu G_{\mu\nu}^{(0,l)}\right)=H^{(l)}_\mu\,,
\end{equation}
on $\Sigma$ where $H^{(l)}\equiv\pounds_n I^{(l)}$ is expressible in terms of $g_{\mu\nu}$, $\partial_0 g_{\mu\nu}$, $(\pounds_n)^k R_{\mu\nu}$ with $k\leq 2l$ and their derivatives parallel to $\Sigma$.

\noindent    
Next we show that a PDE of the form \eqref{eq:boxR_n} can be reformulated in terms of a system of second order  nonlinear wave equations. In particular, it can be written as
    \begin{equation}
        g^{\alpha\beta}\partial_\alpha \partial_\beta v_A=F_{A}(v,\partial v) \,,\label{eq:reduced_wave}
    \end{equation}
where $F_{A}$ is a set of (tensor-valued) polynomials of its arguments and $v$ stands for the following list of variables: $g_{\mu\nu}$, $W^{(k)}$ with $0\leq k \leq n-1$, $R^{(p,q)}$ with $p+q\leq n$ and $q\leq n$. Initial data required for \eqref{eq:reduced_wave} consists of a codimension 1 spacelike surface $\Sigma$ and initial values for the fields $v$, $\partial_0 v$. This data can be obtained from $(\gamma_{ij},K_{ij},{\rho}_{ij}^{(0)},\ldots, {\rho}_{ij}^{(2n+1)})$ and derivatives of these fields parallel to $\Sigma$ as explained above.

\noindent
In harmonic gauge the metric can be evolved using the coordinate expression for the Ricci tensor
    \begin{equation}
        g^{\alpha\beta}\partial_\alpha \partial_\beta g_{\mu\nu}=F_{\mu\nu}^{(g)}(g,\partial g, R^{(0,0)})\,. \label{eq:wave_eqs}
    \end{equation}
The wave equations for $W^{(k)}$ and for $R^{(p,q)}$ with $p\geq 1$ and $q<n$ are given by the geometric identities in \eqref{eq:wave_W}--\eqref{eq:wave_R}. These identities are now promoted to dynamical equations for $W^{(k)}$ and for $R^{(p,q)}$. Note that in the equations for $k=n-1$ and $p=n-q+1$ one needs to replace $W^{(n)}$ with $\nabla W^{(n-1)}$ and $R^{(n-q+1,q)}$ with $\nabla R^{(n-q,q)}$. For the $R^{(p,q)}$ variables with $p=0$ and $q<n$ we simply have
\begin{equation}
        \Box R^{(0,q)}=R^{(0,q+1)}\,.
\end{equation}
Finally $R^{(0,n)}$ can be evolved using \eqref{eq:boxR_n}. Rewriting all (1st order) covariant derivatives in terms of partial derivatives and the metric connection gives the desired system of diagonal wave equations which admits a well-posed Cauchy problem~\cite{Taylor91}.

\noindent
To show that solutions of the wave system reproduce solutions of \eqref{eq:boxR_n}, we have to demonstrate that the following constraints hold throughout the evolution:
    \begin{itemize}
    \item[(i)] Algebraic constraints
    \begin{align}
    {\cal R}^{(p,q)}_{\mu\nu \sigma_1\ldots \sigma_p}&\equiv{R}^{(p,q)}_{\mu\nu \sigma_1\ldots \sigma_p}- \nabla_{\sigma_1}\ldots \nabla_{\sigma_p} \Box^q R_{\mu\nu}, \\
    {\cal W}^{(k)}_{\mu\nu\rho\sigma \alpha_1\ldots \alpha_k}&\equiv  W^{(k)}_{\mu\nu\rho\sigma \alpha_1\ldots \alpha_k}- \nabla_{\alpha_1}\ldots \nabla_{\alpha_k} W_{\mu\nu\rho\sigma},
    \end{align}
    \item[(ii)] Harmonic gauge condition
    \begin{equation}
        \Gamma^\mu=0\,,
    \end{equation}
    \item[(iii)] Bianchi constraints ($0\leq l\leq n$)
    \begin{align}
        {\cal B}^{(l)}_\mu &\equiv \nabla^\nu G_{\mu\nu}^{(l)}- I_\mu^{(l)}\,, \label{eq:Bianchi}\\
        {\cal C}^{(l)}_\mu &\equiv (\pounds_n) (\nabla^\nu G_{\mu\nu}^{(l)} )-H_\mu^{(l)}.\,\label{eq:Bianchi_dot}
    \end{align}
    \end{itemize}
    These constraints satisfy linear homogeneous wave equations that are straightforward to derive, see the Supplemental Material. Now assume we set up initial data such that the constraints $\Gamma^\mu$, ${\cal W}^{(k)}$, ${\cal R}^{(p,q)}$, ${\cal B}^{(m)}$ and ${\cal C}^{(m)}$ vanish on the initial data slice. Then there exists a unique solution to the linear homogeneous system of wave equations listed above, which must be the trivial solution. Hence $\Gamma^\mu$, ${\cal W}^{(k)}$, ${\cal R}^{(p,q)}$, ${\cal B}^{(m)}$ will be zero everywhere and consequently, the derivatives of ${\cal B}^{(m)}$, i.e., ${\cal C}^{(m)}$ will also be zero everywhere. Hence the constraints are propagated and the wave equations \eqref{eq:reduced_wave} solve equations \eqref{eq:boxR_n}.

\end{proof}

\PRLsection{Discussion}%
We have shown how to formulate a well-posed initial value problem for any polynomial truncation of the EFT of gravity. The main idea is that by performing a field redefinition of the metric, one can modify the structure of the equation of motion in a way such that they can be cast as a system of nonlinear wave equations. In the regime of validity of EFT, this field redefinition is expected to not affect the predictions of the original theory. 

The regularising terms that we add to the original theory introduce fiducial heavy degrees of freedom ($n+1$ massive spin-$0$ and $n+1$ massive spin-$2$ particles).
We fix $\alpha_n = 2\beta_n$, and the remaining coefficients $\alpha_{k\leq n}$ and $\beta_{k<n}$ of the regularising Lagrangian determine the masses of the respective (fiducial) degrees of freedom in flat spacetime. 
The latter are determined by the roots of two polynomials of degree $n+1$ (one for the spin-$0$, one for the spin-$2$ modes) and can be chosen such as to avoid tachyonic masses, in flat spacetime and, more generally, within the EFT regime of validity. For instance, the choice $\alpha_k = 2\beta_k$ and $\alpha_k=\alpha^{2k+2} \binom{n+1}{k+1}$ leads to equal and non-tachyonic masses $m^2 = (\alpha\ell)^{-2}$. For a natural value of $\alpha$, 
$(mL)^{-1}\ll (m\ell)^{-1}\sim 1$ (recall that $L$ is the shortest characteristic length scale of the low-energy physics) and 
hence, the flat space masses will dominate over any curvature-induced mass terms.
Moreover, it is thereby guaranteed that any fiducial ghost modes propagate with masses which are beyond the cutoff scale. We therefore expect that any potential instabilities, ghost-like or tachyonic, can only occur once the evolution surpasses the EFT regime of validity.
These expectations are in line with the behaviour observed in nonlinear simulations of Quadratic Gravity~\cite{Held:2023aap} but will have to be confirmed in concrete implementation. In a forthcoming publication we will show how to make such a concrete implementation in the cases of the cubic and quartic EFTs of gravity, and extract the waveforms produced in black hole binary mergers in these theories. 

A related open question is how the regularising terms affect the long time behaviour of the solutions. It would be very interesting to study this problem rigorously, even in the context of simpler EFTs as in \cite{Reall:2021ebq}.

The augmented theory requires the prescription of more initial data than in the truncated theory. One might wonder how we obtain initial data for the extra degrees of freedom. One possibility is to apply a perturbative reduction of order procedure \cite{Parker:1993dk,Flanagan:1996gw} on the initial data slice to express initial data for higher order derivatives in terms of data for $(\gamma_{ij},K_{ij})$. This approach would solve the initial value constraints up to an error term of order $\ell^{2n+2}$. It would be desirable, however, to improve on the order reduction method so as to avoid significant constraint violations in numerical simulations. We plan to investigate this problem in the future.

We expect that our approach straightforwardly generalizes to EFTs of matter fields with or without coupling to gravity. In that case one also needs to perform field redefinitions of the matter fields (in addition to those of the metric). In particular, when the matter is a fluid, our approach naturally generalizes BDNK theory \cite{Bemfica:2017wps,Bemfica:2019knx,Kovtun:2019hdm} to dissipative fluids of arbitrary order. We will leave the details for future work.

\PRLsection{Acknowledgements}%
We would like to thank Ramiro Cayuso, Harvey Reall, Bolys Sabitbek and Bob Wald for discussions. PF and ADK are supported by the STFC Consolidated Grant ST/X000931/1. AH thanks Queen Mary University of London for hospitality and the Royal Society for support as part of the Newton Alumni Grant AL\textbackslash221030.

\bibliography{refs}
\bibliographystyle{apsrev4-1}

\clearpage


\widetext
\begin{center}
{\large Supplemental Material}
\end{center}

\setcounter{equation}{0}
\setcounter{page}{1}
\makeatletter
\renewcommand{\theequation}{S\arabic{equation}}

\section{A simple example}

In this section we illustrate some of the ideas presented in the Letter on a simple equation. Let $(t,x)$ be coordinates in $(1+1)$-dimensional Minkowski space and consider a PDE
\begin{equation}
    \Box^{n} u =F(u,\ldots, \partial^{2n-1}u)\,, \label{eq:toy_n}
\end{equation}
for a scalar field $u$ with initial conditions
\begin{equation}
     u(0,x)=f_0(x), \qquad \partial_t u(0,x)=f_1(x), \qquad  \ldots, \qquad \partial_t^{2n-1}u(0,x)=f_{2n-1}(x)\,. \nonumber
\end{equation}
Consider first $F=0$. In this case one can explicitly solve the Cauchy problem by e.g., taking the Fourier transform of \eqref{eq:toy_n} w.r.t. the $x$ coordinate. Let $\xi$ denote the (spatial) wavenumber and let ${\tilde u}(t,\xi)$ denote the Fourier transform of $u$. Then \eqref{eq:toy_n} with $F=0$ reduces to the ODE
\begin{equation}
	(-\partial_t^2-\xi^2)^n {\tilde u}(t,\xi)=0\,, \nonumber
\end{equation}
whose general solution can be written as
\begin{equation}
	{\tilde u}(t,\xi)=P(t) e^{\ti\xi t}+Q(t) e^{-\ti\xi t}\,,
\end{equation}
where $P$ and $Q$ are both polynomials of degree $n-1$ in the time coordinate $t$. This means that even though a unique solution exists for generic initial data, the solution will grow polynomially in time. This is related to the fact that equation \eqref{eq:toy_n} with $F=0$ is only weakly hyperbolic and the corresponding Cauchy problem is only {\it weakly} well-posed \cite{Kreiss1989}: the $H^s$-norm of the solution can (generically) only be bounded by a higher order Sobolev norm of the initial data. More precisely, assume that the initial data is such that $f_k\in H^{s+n-1-k}$ (where $s\geq 2n+2$). Then for a generic choice of $f_k$, we have
	\begin{equation}
		||u||_{H^s}(t)\leq C(t)\sum\limits_{k=0}^{2n-1}||f_k||_{H^{s+n-1-k}}\,,
	\end{equation}
	for some continuous function $C$.
For some choices of $F\neq 0$ the Cauchy problem is ill-posed. Take for example $n=2$ and $F=-\epsilon\partial_x^3 u$ with a positive constant $\epsilon$. (Of course, this equation is not Lorentz-covariant but it mimics the behaviour of more complicated covariant equations.) In this case one can still write down an explicit solution to \eqref{eq:toy_n} by taking a Fourier transform. The general solution for the Fourier transform ${\tilde u}(t,\xi)$ can be written as
\begin{equation}
		{\tilde u}(t,\xi)=\sum\limits_{s_1,s_2=\pm 1} \tilde{f}_{s_1s_2}(\xi) e^{\lambda_{s_1s_2}(\xi)t}\,,
\end{equation}
where $\lambda_{s_1s_2}(\xi)$ are given by
$$\lambda_{s_1,s_2}=s_1 \sqrt{-\xi^2+s_2 \sqrt{\ti\epsilon \xi^3}}\,.$$
Note that for large wavenumbers ($\xi\to \infty$) the asymptotic behaviour of Re$\lambda_{s_1s_2}$ is $s_1\sqrt{\epsilon|\xi|}$. Therefore, if $s_1=+1$ then ${\tilde u}(t,\xi)$ diverges as $\xi\to\infty $, meaning that there exists no regular solution $u(t,x)$ for any $t>0$ if the initial data is such that $\tilde{f}_{+s_2}\neq 0$.

Despite the previous counterexample, there exist choices of $F$ that are not dangerous. An obvious example is given by $n=2$ and $F=\epsilon^2 \Box u$ with a constant $\epsilon>0$. In this case the solution for the Fourier transform is simply given by
\begin{equation}
	{\tilde u}(t,\xi)=\sum\limits_{s=\pm 1} \left[\tilde{f}_{s}(\xi) e^{s \ti \xi t}+\tilde{g}_{s}(\xi) e^{s \ti \sqrt{\xi^2+\epsilon^2 } t}\right]\,,
\end{equation}
i.e., a sum of a decoupled massless and a massive scalar wave. The Cauchy problem is clearly well-posed and, in fact, the solutions have a better behaviour than in the $F=0$ case since there is no polynomial growth in time.

More generally, let
\begin{equation}
	u^{(p,q)}_{\mu_1\ldots \mu_p}\equiv \partial_{\mu_1}\ldots \partial_{\mu_p}\Box^q u\,,
\end{equation}
and suppose that $F$ is a smooth function of the fields $u^{(p,q)}$ with $p+q\leq n$ and $q\leq n-2$. Then it is possible to recast \eqref{eq:toy_n} as a system of second order wave equations:
\begin{align*}
	\Box u^{(p,q)}_{\mu_1\ldots \mu_p}&=u^{(p,q+1)}_{\mu_1\ldots \mu_p}, \qquad \qquad 0\leq p\leq n-q-1,~ 0\leq q \leq n-2\,, \\
	\Box u^{(n-q,q)}_{\mu_1\ldots \mu_p}&=\partial_{\mu_1} u^{(n-q-1,q+1)}_{\mu_2\ldots \mu_p},\qquad \qquad 0\leq q \leq n-2\,, \\
	\Box u^{(0,n-1)}&=F\,.
\end{align*} 
Hence, the Cauchy problem is locally well-posed for such choices of $F$.

\section{Technical details of the proof of Theorem \ref{thm:short}}

\subsection{Geometric identities}

\begin{lemma}\label{lem:waveWR}
   Let $k,q$ be non-negative integers and $p$ a positive integer. Then the tensor fields $W^{(k)}$ and $R^{(p,q)}$ satisfy geometric identities of the form
   \begin{align}
       \Box W^{(k)}&=F^{(k)}_W\,, \\
       \Box R^{(p,q)}&=R^{(p,q+1)}+F^{(p,q)}_R\,, 
   \end{align}
   such that $F_{(k)}^W$ is a type $(0,k)$ tensor-valued polynomial in the variables $W^{(l)}$ with $0\leq l\leq k$, and $R^{(s,0)}$ with $0\leq s\leq k+2$; $F_{(p,q)}^R$ is a type $(0,p+2)$ tensor-valued polynomial in the variables $W^{(l)}$ with $0\leq l\leq p-1$, and $R^{(s,t)}$ with $s+t\leq p+q$, $t\leq q$. 
\end{lemma}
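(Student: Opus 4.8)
The plan is to prove both families of identities by induction — on $k$ for the Weyl identity, and on $p$ (uniformly in $q$) for the Ricci identity. The only algebraic input is the commutator of the wave operator with a covariant derivative. Expanding $\Box\nabla_a T-\nabla_a\Box T$ for an arbitrary tensor field $T$, reordering the remaining derivatives with the Ricci identity, and using the \emph{contracted} second Bianchi identity $\nabla^a R_{abcd}=\nabla_c R_{bd}-\nabla_d R_{bc}$ to rewrite the single Riemann divergence that arises, one obtains schematically
\begin{equation}
[\Box,\nabla_a]T=\mathrm{Riem}\ast\nabla T+(\nabla\mathrm{Ric})\ast T\,,
\end{equation}
where $\mathrm{Riem}$ denotes the (uncontracted or once-contracted) Riemann tensor and $\mathrm{Ric}$ the Ricci tensor. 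The key feature is that no term carrying an \emph{uncontracted} derivative of the Riemann tensor survives: every derivative that lands on a curvature factor is a contracted divergence and is therefore reduced, by the Bianchi identity, to a single derivative of the Ricci tensor. Decomposing $\mathrm{Riem}=W+(g\ast\mathrm{Ric})$ then resolves every curvature factor into the admissible fields $W^{(0)}$ and $R^{(0,0)}$.

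\textbf{The $W^{(k)}$ identity.} For the base case $k=0$ I would take $\nabla^f$ of the second Bianchi identity $\nabla_{[f}R_{ab]cd}=0$, commute the resulting second derivatives, and again invoke the contracted Bianchi identity; this expresses $\Box R_{abcd}$ as a sum of terms quadratic in $\mathrm{Riem}$ and terms linear in the Ricci tensor with two derivatives. Substituting $R_{abcd}=W_{abcd}+(g\ast\mathrm{Ric})$ and expanding yields $\Box W_{abcd}=F^{(0)}_W$ with $F^{(0)}_W$ polynomial in $W^{(0)}$ and $R^{(s,0)}$, $s\leq 2$. For the inductive step, with $W^{(k+1)}=\nabla W^{(k)}$, I would use the commutator identity to move $\Box$ through the extra derivative: the result is $\nabla F^{(k)}_W$, which by the Leibniz rule is polynomial in $W^{(l)}$ with $l\leq k+1$ and $R^{(s,0)}$ with $s\leq k+3$, plus commutator terms of the schematic form $\mathrm{Riem}\ast W^{(k+1)}$ and $(\nabla\mathrm{Ric})\ast W^{(k)}$; the Ricci decomposition places all of these in the class claimed for $F^{(k+1)}_W$.

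\textbf{The $R^{(p,q)}$ identity.} For $p=1$ I would write $\Box R^{(1,q)}=\Box\nabla_c(\Box^q R_{ab})=\nabla_c\Box^{q+1}R_{ab}+[\Box,\nabla_c]\Box^q R_{ab}=R^{(1,q+1)}+F^{(1,q)}_R$; the commutator term equals, schematically, $(W^{(0)}+R^{(0,0)})\ast R^{(1,q)}+R^{(1,0)}\ast R^{(0,q)}$, so only $W^{(0)}$ occurs, every $\Box$-count is $\leq q$, and every total order $s+t$ is $\leq 1+q$ — precisely the asserted bounds for $p=1$. For the step $p\to p+1$, using $R^{(p+1,q)}=\nabla R^{(p,q)}$, I would expand $\Box R^{(p+1,q)}=\nabla_c\bigl(R^{(p,q+1)}+F^{(p,q)}_R\bigr)+[\Box,\nabla_c]R^{(p,q)}$, whose leading term is exactly $R^{(p+1,q+1)}$. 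The term $\nabla_c F^{(p,q)}_R$ raises the Weyl-derivative order from $\leq p-1$ to $\leq p$, raises the total order from $\leq p+q$ to $\leq p+1+q$, and leaves every $\Box$-count $\leq q$; the commutator term contributes $(W^{(0)}+R^{(0,0)})\ast R^{(p+1,q)}$ together with $R^{(1,0)}\ast R^{(p,q)}$, both within the claimed class. This closes the induction.

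\textbf{Main obstacle.} The substance is the bookkeeping, and in particular making the bound $l\leq p-1$ on the Weyl-derivative order sharp. This hinges entirely on the fact that the only derivative of Riemann the commutator can generate is a contracted divergence, which the second Bianchi identity converts into $\nabla\mathrm{Ric}$; derivatives of the Weyl tensor are thus produced only by differentiating $F^{(p-1,q)}_R$, which accounts for the gain of exactly one derivative-order per inductive step. One must also check that reordering covariant derivatives, and the decomposition $\mathrm{Riem}\to W^{(0)}+g\ast R^{(0,0)}$ inside products, never introduce fields outside the stated lists, and that the $\Box$-count $t\leq q$ is never exceeded (the commutators add plain derivatives and curvature factors but never a $\Box$). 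Finally, since the argument uses only the metric connection, the curvature tensors, and the Bianchi and Ricci identities, no volume-form (parity-odd) terms ever appear in $F^{(k)}_W$ or $F^{(p,q)}_R$.
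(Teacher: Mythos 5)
Your proof is correct and follows essentially the same route as the paper: induction on $k$ and on $p$, with the base cases supplied by the wave equation for the Riemann tensor (from the differentiated second Bianchi identity) and the commutator $[\Box,\nabla]$ reduced via the Ricci and contracted Bianchi identities, followed by the Weyl--Ricci decomposition. The only cosmetic difference is that the paper leaves some commutator terms in the schematic form $W^{(1)}\ast T$ rather than reducing the Riemann divergence fully to $R^{(1,0)}$, but both versions stay within the admissible class of fields (note only that $\Box R_{ab}$ arising from the trace part of $\Box R_{abcd}$ must be read as the metric contraction $g^{cd}R^{(2,0)}_{ab\,cd}$ to respect the bound $t=0$ in the base case).
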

\begin{proof}
    We will derive the wave equations \eqref{eq:wave_W} and \eqref{eq:wave_R} inductively. For $W^{(k)}$ we do this by induction on $k$. For $k=0$ we have the identity \cite{Choquet-Bruhat:2009xil}
\begin{align}
&\Box R_{abcd} - 2\,R_{aedf}\,R_{b\phantom{e}c}^{\phantom{b}e\phantom{c}f} + 2\,R_{aecf}\,R_{b\phantom{e}d}^{\phantom{b}e\phantom{d}f} + R_{abef}R^{ef}_{\phantom{ef}cd}  
+2\,R_{abe[c}\,R^{e}_{\phantom{e}d]} -2\,\nabla_c\nabla_{[a} R_{b]d} + 2\,\nabla_d\nabla_{[a} R_{b]c} = 0\,.
\label{eq:box_riem}
\end{align}
Writing the Riemann tensor in terms of the Weyl tensor and the Ricci tensor, \eqref{eq:box_riem} expresses $\Box W$ in terms of $R^{(2,0)}$, $R^{(0,0)}$ and $W^{(0)}$, confirming the statement of the lemma for $k=0$. Next, we assume that $W^{(k-1)}$ satisfies a wave equation such that the source term on the RHS is a polynomial in $W^{(l)}$ with $0\leq l\leq k-1$ and $R^{(s,0)}$ with $0\leq s\leq k+1$. To demonstrate that $W^{(k)}$ also satisfies a wave equation of the required form we consider
\begin{align}
\Box \nabla_a W^{(k-1)}&=\nabla_a \Box W^{(k-1)}+\ldots \label{eq:W_k-1} 
\end{align}
where we used the Ricci identity to commute the $\Box$ operator past $\nabla_a$, and the ellipsis stands for terms of the schematic form $W^{(1)}W^{(k-1)}$, $W^{(0)}W^{(k)}$, $R^{(1,0)}W^{(k-1)}$, $R^{(0,0)}W^{(k)}$. It follows from the induction hypothesis that the first term on the RHS of \eqref{eq:W_k-1} can be expressed as a polynomial in the variables $W^{(l)}$ with $0\leq l\leq k$ and $R^{(s,0)}$ with $0\leq s\leq k+2$, closing the induction loop.

Next, we consider the case of $R^{(p,q)}$. Our strategy is now to obtain a wave equation for $R^{(p,q)}$ inductively on $p$. For $p=1$ we have
\begin{align}
    \Box \nabla_c R_{ab}^{(0,q)}=&~\nabla^d\nabla_c\nabla_d R_{ab}^{(0,q)}-\nabla^d\left(R^e{}_{adc}R_{eb}^{(0,q)}+R^e{}_{bdc} R_{ae}^{(0,q)}\right) \nonumber \\
    =&~\nabla^a \Box  R_{ab}^{(0,q)}+R^e{}_c \nabla_e R_{ab}^{(0,q)}-2R^e{}_{adc}\nabla^d R_{eb}^{(0,q)}-2R^e{}_{bdc}\nabla^d R_{ae}^{(0,q)} \nonumber\\
    & -\left(\nabla_a R^e{}_{c}-\nabla^e R_{ac}\right)R_{eb}^{(0,q)}-\left(\nabla_b R^e{}_{c}-\nabla^e R_{bc}\right)R_{ea}^{(0,q)} \,,\label{eq:gradR}
\end{align}
where we used the Ricci identity twice and the contracted Bianchi identity in the second line. We can write \eqref{eq:gradR} schematically as
$$\Box R^{(1,q)}=R^{(1,q+1)}+\text{ terms with $W^{(0)}R^{(1,q)}$, $R^{(1,0)}R^{(0,q)}$, $R^{(0,0)}R^{(1,q)}$}\,, $$
in agreement with the statement of the lemma for $p=1$. Assuming that $R^{(p-1,q)}$ satisfies a wave equation of the required form, we seek a wave equation for $R^{(p,q)}$. To this end, we compute
\begin{equation}
    \Box \nabla_a R^{(p-1,q)}=\nabla_a \Box R^{(p-1,q)}+\ldots \label{eq:R_p-1}
\end{equation}
where we employed the Ricci identity again, producing terms (denoted by the ellipsis) of the schematic form $W^{(1)}R^{(p-1,q)}$, $W^{(0)}R^{(p,q)}$, $R^{(1,0)}R^{(p-1,q)}$, $R^{(0,0)}R^{(p,q)}$. Regarding the structure of the first term in \eqref{eq:R_p-1}, we observe that the induction hypothesis implies that it can be expressed as a sum of $R^{(p,q+1)}$ and a polynomial depending on $W^{(l)}$ with $0\leq l\leq p-1$,
and $R^{(s,t)}$ with $s+t\leq p+q$, $t\leq q$, thus completing the proof.
\end{proof}

\begin{lemma}\label{lem:grad_G}
   Let $n$ be a positive integer. Then $G_{ab}^{(0,n)}$ satisfies an identity of the form
   \begin{equation}
       \nabla^bG_{ab}^{(0,n)}=I^{(n)}_a
   \end{equation}
   where $I^{(n)}_a$ can be expressed as a sum of monomials such that each monomial is a product of factors of the metric and the tensor fields $W^{(k)}$ with $0\leq k \leq n-1$, and $R^{(p,q)}$ with $p+q\leq n$ and $q<n$.
\end{lemma}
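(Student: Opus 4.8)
\textbf{Plan for the proof of Lemma \ref{lem:grad_G}.}
The plan is to prove the identity by induction on $n$, building on the divergence structure of the Einstein tensor and the wave identities from Lemma \ref{lem:waveWR}. The base case $n=0$ is just the contracted Bianchi identity $\nabla^b G_{ab}=0$, which has the required form trivially (an empty sum of monomials). For the inductive step, I would assume the statement holds for $n-1$ and study $\nabla^b G^{(0,n)}_{ab} = \nabla^b \Box G^{(0,n-1)}_{ab}$.

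The key step is to commute the divergence $\nabla^b$ past the d'Alembertian $\Box$. Using the Ricci identity twice (exactly as in the computation \eqref{eq:gradR} in the proof of Lemma \ref{lem:waveWR}), one writes
\begin{equation}
\nabla^b \Box G^{(0,n-1)}_{ab} = \Box \left(\nabla^b G^{(0,n-1)}_{ab}\right) + (\text{curvature}) \ast (\text{derivatives of } G^{(0,n-1)}) \,, \nonumber
\end{equation}
where the schematic correction terms are contractions of the Riemann tensor (equivalently $W^{(0)}$ and $R^{(0,0)}$) and first derivatives of the Riemann tensor (equivalently $W^{(1)}$, $R^{(1,0)}$) against $G^{(0,n-1)}$ and $\nabla G^{(0,n-1)}$. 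Since $G^{(0,n-1)}$ and $G^{(1,n-1)}$ differ from $R^{(0,n-1)}$ and $R^{(1,n-1)}$ only by metric traces of lower-$q$ curvature terms, these correction terms are manifestly of the form $W^{(k)}$ with $k\le 1$ times $R^{(p,q)}$ with $p\le 1$, $q\le n-1$, which lie within the allowed range $p+q\le n$, $q<n$. The first term, $\Box(\nabla^b G^{(0,n-1)}_{ab}) = \Box I^{(n-1)}_a$ by the induction hypothesis; applying $\Box$ to the explicit monomial expression for $I^{(n-1)}_a$ and using the Leibniz rule together with the wave identities \eqref{eq:wave_W}--\eqref{eq:wave_R} to rewrite every $\Box W^{(k)}$ and $\Box R^{(p,q)}$, plus the Ricci identity to absorb the cross terms $\nabla(\cdot)\cdot\nabla(\cdot)$, produces a new sum of monomials whose index ranges I would track carefully to confirm they still satisfy $p+q\le n$ and $q<n$.

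The main obstacle is bookkeeping of the index ranges under the action of $\Box$: each application of a wave identity \eqref{eq:wave_R} raises $q$ by one ($\Box R^{(p,q)} = R^{(p,q+1)} + \ldots$), so I must verify that this never pushes $q$ up to $n$ or $p+q$ above $n$. The point is that $I^{(n-1)}_a$ only contains $R^{(p,q)}$ with $q<n-1$, so even after one $\Box$ bumps $q\to q+1\le n-1<n$, the bound is respected; similarly the Leibniz cross terms and Ricci-identity commutator terms only lower the differential order or add bounded-order curvature factors. I would also note that $I^{(n-1)}_a$ is bilinear (each monomial a product of \emph{two} curvature factors), a structural feature inherited at each step that helps control the total derivative count; this is the analogue of the "exactly two factors" statements in \eqref{eq:eom_spec_reg} and \eqref{eq:eom_scalar_reg}. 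Once the index ranges are checked, the proof closes.
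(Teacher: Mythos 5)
Your proposal is correct and follows essentially the same route as the paper's proof: induction on $n$, commuting $\nabla^b$ past $\Box$ with the Ricci identity to reduce to $\Box I^{(n-1)}_a$ plus commutator terms of the form $R^{(0,0)}G^{(1,n-1)}$ and $R^{(1,0)}G^{(0,n-1)}$ that lie within the allowed range, and then invoking the wave identities of Lemma \ref{lem:waveWR} together with exactly the index bookkeeping you describe (since $I^{(n-1)}$ only contains $R^{(p,q)}$ with $q\leq n-2$ and $p+q\leq n-1$, one application of $\Box$ keeps $q+1<n$ and $p+q\leq n$). The only inaccuracy is your closing aside that bilinearity of $I^{(n-1)}_a$ is inherited at each step --- the source terms $F^{(p,q)}_R$ and $F^{(k)}_W$ in Lemma \ref{lem:waveWR} are themselves products of curvature factors, so applying $\Box$ to a bilinear monomial generically produces trilinear ones --- but this is harmless because the lemma only asserts a sum of monomials with no bound on the number of factors.
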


\begin{proof}
    Once again, we employ induction on $n$. The first part of the statement can be verified for $n=1$ by the following straightforward calculation
    \begin{align}
     \nabla^a\Box G_{ab}=&~\nabla^c\nabla^a\nabla_c G_{ab}-R^{ad}\nabla_d G_{ab}+R^{cd}\nabla_c G_{db}+R_b{}^{dac}\nabla_c G_{ad} \nonumber \\
     =&~\Box \nabla^a G_{ab}+\nabla^c\left(R^d{}_cG_{db}-R^d{}_b{}^a{}_c G_{ad}\right) 
      -R^{ad}\nabla_d G_{ab}+R^{cd}\nabla_c G_{db}+R_b{}^{dac}\nabla_c G_{ad}\,, \label{eq:gradG}
    \end{align}
    where the Ricci identity was used in two steps. Due to the contracted Bianchi identity, the first term in the second line vanishes and the third term in the second line that involves $\nabla^c R^d{}_b{}^a{}_c$ can be written in terms of $R^{(1,0)}$. It follows that $ \nabla^a\Box G_{ab}$ can be expressed in terms of $R^{(0,0)}$, $R^{(1,0)}$ and $W^{(0)}$ which is exactly the claim of the lemma for $n=1$.

    Next, we assume that $\nabla^bG_{ab}^{(0,n-1)}$ can be expressed in the required form. To obtain the result for $\nabla^bG_{ab}^{(0,n)}$ we calculate (similarly to \eqref{eq:gradG})
    \begin{align}
     \nabla^a\Box G_{ab}^{(0,n-1)}&=\nabla^c\nabla^a\nabla_c G_{ab}^{(0,n-1)}-R^{ad}\nabla_d G_{ab}^{(0,n-1)}+R^{cd}\nabla_c G_{db}^{(0,n-1)}+R_b{}^{dac}\nabla_c G_{ad}^{(0,n-1)} \nonumber\\
     &=\Box \nabla^a G_{ab}^{(0,n-1)}+\ldots \,, \label{eq:gradGnproof}
    \end{align}
    where the ellipsis stands for terms of the schematic form $R^{(0,0)}G^{(1,n-1)}$ and $R^{(1,0)}G^{(0,n-1)}$, i.e., a sum of monomials consisting of factors of $R^{(p,q)}$ with $p+q\leq n$ and $q<n$. Now consider the first term in \eqref{eq:gradGnproof}; using the induction hypothesis and Lemma \ref{lem:waveWR} concludes the proof.

\end{proof}

\subsection{Equation of motion of the regularising theory}

Next we study the structure of the equation of motion of the regularising theory. First, we have the following result.
\begin{lemma}\label{lem:reg_1}
    Let $n$ be a non-negative integer and consider the theory
    \begin{equation}
    S_n=\int \bepsilon\,\ R^{ab}\Box^n G_{ab} \,, \label{eq:S_n}
    \end{equation}
    which is proportional to the action of ${\cal L}_{\text{reg},n}$ with $\alpha_n=2\beta_n$. Variation of this action gives an equation of motion of the following form
    \begin{equation}
    E^{ab}_{(n)}\equiv  \sum\limits_{\lceil n/2 \rceil\leq k\leq n}\left( {\cal P}^{abcd}_{(n,k)}(\nabla) R_{cd}^{(0,k)}+{\cal Q}_{(n,k)}^{abcdef}R_{cd}^{(0,n-k)}R_{ef}^{(0,k)}\right)=0 \,,\label{eq:lem1}
    \end{equation}
    where ${\cal P}^{abcd}_{(n,k)}(\nabla)$ are second order differential operators such that
    \begin{equation}
    {\cal P}_{(n,n)ab}{}^{cd}(\nabla)R_{cd}^{(0,n)}=-\Box G^{(0,n)}_{ab}-g_{ab}\nabla^c\nabla^d G^{(0,n)}_{cd}+2\nabla^c \nabla_{(a}G^{(0,n)}_{b)c} \,,\label{eq:principal}
    \end{equation}
    and for $k<n$
    $${\cal P}_{(n,k)}^{abcd}(\nabla)={\cal A}_{(n,k)}^{abcde_1e_2e_3e_4}R^{(0,n-k-1)}_{e_1e_2} \nabla_{e_3} \nabla_{e_4} + {\cal B}_{(n,k)}^{abcde_1e_2e_3e_4}R^{(1,n-k-1)}_{e_1e_2e_3}\nabla_{e_4}+ {\cal C}_{(n,k)}^{abcde_1e_2e_3e_4}R^{(2,n-k-1)}_{e_1e_2e_3e_4}\,,$$
    and the tensors ${\cal A}_{(n,k)}$, ${\cal B}_{(n,k)}$, ${\cal C}_{(n,k)}$, ${\cal Q}_{(n,k)}$ depend only on the metric.
\end{lemma}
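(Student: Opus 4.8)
The plan is to compute the variation of $S_n$ in \eqref{eq:S_n} directly and then massage the resulting expression into the form \eqref{eq:lem1}. First I would write $\delta S_n = \int \bepsilon\,\big[(\delta R^{ab})\,\Box^n G_{ab} + R^{ab}\,\delta(\Box^n G_{ab})\big] + \int (\delta \bepsilon)\,R^{ab}\Box^n G_{ab}$, and integrate by parts to move all derivatives hitting $\delta g_{ab}$ off of it. Since $\Box^n$ is formally self-adjoint up to curvature commutators, integrating by parts $n$ times on the middle term converts $R^{ab}\,\delta(\Box^n G_{ab})$ essentially into $(\Box^n R^{ab})\,\delta G_{ab}$ plus a collection of commutator terms; each commutator $[\Box,\nabla]$ or $[\Box,\cdot]$ acting through the tower produces, by the Ricci identity, factors of Riemann contracted with lower-order $R^{(p,q)}$'s. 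Tracking the powers of $\Box$ carefully: the highest-derivative piece comes from all $n$ boxes landing on one $G$, giving $\Box^n$ hitting the variation of the Einstein tensor; subleading pieces distribute $k$ boxes on one factor and $n-k$ (morally) on the other, which is the origin of the double sum over $k$ and the bilinear structure ${\cal Q}_{(n,k)} R^{(0,n-k)}R^{(0,k)}$ in \eqref{eq:lem1}. The restriction $\lceil n/2\rceil \leq k \leq n$ reflects that in $\Box^n(AB)$-type Leibniz expansions one can always arrange the two factors so the box-count on the displayed factor is at least half.

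Next I would isolate the principal part. Using the standard linearised-Einstein variation $\delta G_{ab} = -\tfrac12 \Box \delta g_{ab} - \tfrac12 g_{ab}\nabla^c\nabla^d \delta g_{cd} + \nabla^c\nabla_{(a}\delta g_{b)c} + (\text{lower order})$, the term with the maximal number of derivatives on $\delta g$ is $\Box^n R^{ab}$ contracted against this operator, which after taking the metric variation and relabelling gives precisely the operator in \eqref{eq:principal} acting on $R^{(0,n)}_{cd}$ — this is the claim ${\cal P}_{(n,n)}(\nabla)R^{(0,n)} = -\Box G^{(0,n)} - g\,\nabla\nabla G^{(0,n)} + 2\nabla\nabla G^{(0,n)}$, using that trace-reversing $\Box^n R_{ab}$ yields $\Box^n G_{ab} = \Box G^{(0,n)}_{ab} + \ldots$ up to trace terms that get absorbed. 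For $k<n$, the operators ${\cal P}_{(n,k)}(\nabla)$ are still second-order in $\nabla$ but now carry a ``prefactor'' built from a curvature tensor of box-order $n-k-1$; the three terms in the displayed decomposition of ${\cal P}_{(n,k)}$ — with coefficients ${\cal A},{\cal B},{\cal C}$ multiplying $R^{(0,n-k-1)}\nabla\nabla$, $R^{(1,n-k-1)}\nabla$, and $R^{(2,n-k-1)}$ respectively — are exactly the ways two derivatives can be partitioned between hitting $\delta g$ and hitting the curvature prefactor, i.e. a finite Leibniz expansion. I would verify that no term of box-order higher than $n-k-1$ survives in these prefactors by bookkeeping: the $n$ total boxes split as $k$ (on the ``active'' leg) plus $n-k$, and one of the latter is consumed in producing the derivative structure, leaving $n-k-1$.

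I would then establish the lemma by induction on $n$, which is cleaner than a direct combinatorial attack. For $n=0$ the action is $\int \bepsilon\, R^{ab}G_{ab}$ and the variation is a classical computation giving a second-order-in-curvature (hence four-derivative) equation of motion whose principal part is $-\Box G_{ab} - g_{ab}\nabla^c\nabla^d G_{cd} + 2\nabla^c\nabla_{(a}G_{b)c}$ plus ${\cal Q}$-type quadratic terms — this is just the equation of motion of $R^{ab}R_{ab}$-type quadratic gravity after using $G_{ab}=R_{ab}-\tfrac12 g_{ab}R$, and it matches \eqref{eq:principal} and \eqref{eq:lem1} at $n=0$. For the inductive step I would relate $S_n$ to $S_{n-1}$: integrating by parts, $\int \bepsilon\, R^{ab}\Box^n G_{ab} = \int \bepsilon\, (\Box R^{ab})\Box^{n-1}G_{ab} + (\text{commutator correction})$, where the commutator correction is a curvature-coupled term of lower box-order; one then varies, applies the induction hypothesis to the $\Box^{n-1}$ piece with $\Box R_{ab}$ in place of $R_{ab}$ (absorbing the $\Box$ via the wave identity \eqref{eq:wave_R}, which promotes $R^{(0,q)}\to R^{(0,q+1)} + F^{(0,q)}_R$), and checks that the box-order and derivative-order bookkeeping shifts exactly as stated — the range of $k$ shifting from $\lceil (n-1)/2\rceil$ to $\lceil n/2\rceil$, and the prefactor orders incrementing by one.

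The main obstacle I expect is the bookkeeping of the commutator terms: every time $\Box$ is commuted past a $\nabla$ in the integration-by-parts or in identifying the principal symbol, the Ricci identity generates Riemann$\cdot(\text{lower tensor})$ contributions, and one must confirm that (a) these never raise the box-count beyond what the lemma allows, (b) they can always be re-expressed in terms of the $W^{(l)}$ and $R^{(p,q)}$ basis rather than bare Riemann, which requires systematically substituting $\text{Riem} = W + (\text{Ricci-part})$, and (c) they only ever contribute to the ${\cal Q}$-type (bilinear, lower-derivative) terms and to the ${\cal A},{\cal B},{\cal C}$ prefactors, never polluting the top operator ${\cal P}_{(n,n)}$. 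Lemma \ref{lem:waveWR} is the key tool for (b), since it guarantees the closure of the $\{W^{(l)},R^{(p,q)}\}$ system under $\Box$ with source terms of controlled order; I would lean on it heavily rather than re-deriving the commutator structure from scratch. The symmetrisation and trace-reversal needed to land precisely on the symmetric operator $-\Box G^{(0,n)}_{ab} - g_{ab}\nabla^c\nabla^d G^{(0,n)}_{cd} + 2\nabla^c\nabla_{(a}G^{(0,n)}_{b)c}$ (as opposed to some asymmetric equivalent) is routine but fiddly, and I would do it once at the end using that the equation of motion is automatically symmetric in $ab$ since it comes from a metric variation.
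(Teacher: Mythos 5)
Your proposal follows essentially the same route as the paper: induction on $n$, with the $n=0$ base case being the quadratic-gravity equation of motion, and an inductive step in which the extra $\Box$ shifts every $R^{(0,k)}$ to $R^{(0,k+1)}$ while the terms generated by $\delta\Box\neq\Box\,\delta$ (and the attendant Ricci-identity commutators) populate the lower-order ${\cal P}_{(n,k<n)}$ and ${\cal Q}$ slots. The only real difference is one of execution: the paper keeps the extra box on the Einstein-tensor factor, computes $\delta\Box G^{(0,n-1)}_{ab}=\Box\,\delta G^{(0,n-1)}_{ab}-\nabla^c\nabla^d G^{(0,n-1)}_{ab}\,\delta g_{cd}+\ldots$, and observes that literally the same formal-adjoint operator appearing in $E^{ab}_{(n-1)}$ now acts on $R^{(0,1)}$ instead of $R$, whereas you integrate the box onto the Ricci factor and invoke the induction hypothesis ``with $\Box R_{ab}$ in place of $R_{ab}$'' --- which cannot be applied verbatim (the hypothesis concerns the variation of $S_{n-1}$ itself, and $\delta\Box R^{ab}\neq\Box\,\delta R^{ab}$) but reduces to the same variational bookkeeping you already flag as the main obstacle.
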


\noindent
{\bf Remark.} It is interesting to note that for a covector $\xi$ the tensor ${\cal P}_{(n,n)}^{abcd}(\xi)$ coincides with the principal symbol of the Einstein equation when written in terms of the metric. This is related to the fact that the theory \eqref{eq:S_n} can be generated from GR by a perturbative field redefinition that involves the lower order equations of motion.

\begin{proof}
    We will prove the statement by induction on $n$. We can explicitly check the validity of the statement for $n=0$. Let $g_{ab}(\lambda)$ be a one-parameter family of metrics and let $\delta g_{ab}\equiv (\ud g_{ab}/\ud \lambda)|_{\lambda=0}$ and similarly for all tensor fields depending on the metric. Variation of \eqref{eq:S_n} with $n=0$ gives
    \begin{equation}
        E^{ab}_{(0)}\equiv {\cal P}^{abcd}_{(0,0)}(\nabla)R_{cd}+2R^{(a}{}_cG^{b)c}-g^{ab}R^{cd}G_{cd}\,,
    \end{equation}
    which is the required structure for $n=0$.
    
    \noindent
    Now suppose that the statement of the lemma holds for the action $S_{n-1}$. We shall show that this implies that the variation of $S_n$ yields an equation with the required structure.

    \noindent
    Consider the variation of $S_{n-1}$ and let $X_{ab}=G^{(0,n-1)}_{ab}$. A calculation gives
    \begin{align*}
        \delta S_{n-1}&=\int \bepsilon\Bigl\{\frac12 R^{cd}X_{cd} g^{ab}\delta g_{ab}+({\cal D}_R\delta g)^{ab}X_{ab}+R^{ab}({\cal D}_X \delta g)_{ab} \Bigr\} \\
        &=\int \bepsilon\Bigl\{\frac12 R^{cd}X_{cd} g^{ab}+{\cal D}_R^\dag X^{ab}+{\cal D}_X^\dag R^{ab}\Bigr\} \delta g_{ab}\,,
    \end{align*}
where ${\cal D}_R$ and ${\cal D}_X$ are differential operators such that $\delta R^{ab}=({\cal D}_R\delta g)^{ab}$ and $\delta X_{ab}=({\cal D}_X\delta g)_{ab}$, ${\cal D}_R^\dag$ and ${\cal D}_X^\dag$ denote the formal adjoints of these differential operators. The equation of motion can then be formally written as
\begin{align}
    E^{ab}_{(n-1)}&=\tfrac12 R^{cd}X_{cd} g^{ab}+{\cal D}_R^\dag X^{ab}+{\cal D}_X^\dag R^{ab} \nonumber \\
    &=\left[\tfrac12 R_{cd} g^{ab}P^{cdef}\Box^{n-1}+P^{abef}{\cal D}_R^\dag \Box^{n-1}+g^{e(a}g^{b)f} {\cal D}_X^\dag \right]R_{ef} \,,\label{eq:E_n-1}
\end{align}
where we substituted $X_{ab}=G^{(0,n-1)}_{ab}$ in the second line and defined
\begin{equation}
    P^{abcd}\equiv g^{a(c}g^{d)b}-\tfrac12 g^{ab}g^{cd}. \label{eq:defP}
\end{equation}
Next we vary $S_n$ as follows
    \begin{align*}
        \delta S_{n}&=\int \bepsilon\Bigl\{\frac12 R^{cd}\Box X_{cd} g^{ab}\delta g_{ab}+({\cal D}_R\delta g)^{ab}\Box X_{ab}+R^{ab}\delta \Box X_{ab} \Bigr\}\,. 
    \end{align*}
The variation of $\Box X$ can be computed as
\begin{equation}
    \delta \Box X_{ab}=\Box \delta X_{ab}-\nabla^c\nabla^d X_{ab}\delta g_{cd}+\ldots \,,\label{eq:varBoxX}
\end{equation}
where the ellipsis stands for terms of the schematic form $X\nabla \delta \Gamma$ and $\nabla X \delta \Gamma$ and $\delta\Gamma^a_{bc}$ denotes the variation of the Christoffel symbols. Integrating by parts to eliminate covariant derivatives acting on $\delta g$ produces terms of the same form as ${\cal P}^{abcd}_{(n,n-1)}(\nabla)R^{(0,n-1)}_{cd}$ in the statement of the lemma. Then we have
    \begin{align*}
             \delta S_{n}&=\int \bepsilon\Bigl\{\frac12 R^{cd}\Box X_{cd} g^{ab}+{\cal D}_R^\dag \Box X^{ab}+{\cal D}_X^\dag \Box R^{ab}+\ldots\Bigr\} \delta g_{ab}\,,
    \end{align*}
and thus
\begin{align}
        E^{ab}_{(n)}&=\frac12 R^{cd}\Box X_{cd} g^{ab}+{\cal D}_R^\dag \Box X^{ab}+{\cal D}_X^\dag \Box R^{ab}+\ldots \nonumber \\
        &=\left[\frac12 R_{cd} g^{ab}P^{cdef}\Box^{n-1}+P^{abef}{\cal D}_R^\dag \Box^{n-1}+g^{a(e}g^{b)f}{\cal D}_X^\dag \right]R_{ef}^{(0,1)}+\ldots \,.\label{eq:E_n}
\end{align}
Here the ellipsis stands for terms with the same structure as ${\cal P}^{abcd}_{(n,n-1)}(\nabla)R^{(0,n-1)}_{cd}$ as discussed above. Note that the differential operator in the square brackets acting on $R^{(0,1)}$ is the same as the differential operator acting on $R$ in \eqref{eq:E_n-1}. It follows from the induction hypothesis that the terms explicitly displayed in \eqref{eq:E_n}  have the desired form with ${\cal P}^{abcd}_{(n,k)}={\cal P}^{abcd}_{(n-1,k)}$. This concludes the proof.
\end{proof}

\begin{lemma}\label{lem:scalar}
    Let $n$ be a non-negative integer and consider the theory
    \begin{equation}
    {\bar S}_n=\int \bepsilon\,\ R\Box^n R \,.\label{eq:scalarS_n}
    \end{equation}
    Variation of this action gives an equation of motion of the following form
    \begin{equation}
    {\bar E}^{ab}_{(n)}\equiv  \sum\limits_{\lceil n/2 \rceil\leq k\leq n}\left( \bar{\cal P}^{ab}_{(n,k)}(\nabla) {\bar R}^{(0,k)}+\bar{\cal Q}_{(n,k)}^{abcd}{ R}^{(0,n-k)}_{cd}{\bar R}^{(0,k)}\right)=0 \,,\label{eq:lem_scalar}
    \end{equation}
    where $\bar{\cal P}^{ab}_{(n,k)}(\nabla)$ are second order differential operators such that
    \begin{equation}
    \bar{\cal P}_{(n,n)ab}{}(\nabla)=2\left(\nabla^a\nabla^b-g^{ab} \Box\right) \,, \label{eq:sc_principal}
    \end{equation}
    and for $k<n$
    $$\bar{\cal P}_{(n,k)}^{ab}(\nabla)=\bar{\cal A}_{(n,k)}^{abcd}{\bar R}^{(0,n-k-1)} \nabla_c \nabla_d + \bar{\cal B}_{(n,k)}^{abcd}\bar{R}^{(1,n-k-1)}_{c}\nabla_d+ \bar{\cal C}_{(n,k)}^{abcd}\bar{R}^{(2,n-k-1)}_{cd}\,,$$
    and the tensors $\bar{\cal A}_{(n,k)}$, $\bar{\cal B}_{(n,k)}$, $\bar{\cal C}_{(n,k)}$, $\bar{\cal Q}_{(n,k)}$ depend only on the metric.
\end{lemma}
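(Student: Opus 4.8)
\textbf{Proof proposal for Lemma \ref{lem:scalar}.}

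The plan is to mirror the proof of Lemma \ref{lem:reg_1} essentially line by line, replacing the Einstein tensor $G_{ab}^{(0,n-1)}$ by the scalar $\Box^{n-1}R = {\bar R}^{(0,n-1)}$ and the operator ${\cal D}_X$ by the operator computing the variation of that scalar. Concretely, I would proceed by induction on $n$. For the base case $n=0$, one varies ${\bar S}_0 = \int \bepsilon\, R^2$ directly: using $\delta(\sqrt{-g}) = \tfrac12 \sqrt{-g}\, g^{ab}\delta g_{ab}$ and $\delta R = R^{ab}\delta g_{ab} + (g^{ab}\Box - \nabla^a\nabla^b)\delta g_{ab}$ (up to total derivatives), one gets ${\bar E}^{ab}_{(0)} = 2(\nabla^a\nabla^b - g^{ab}\Box)R + \tfrac12 g^{ab}R^2 + \ldots$ (schematically $2\bar{\cal P}_{(0,0)}(\nabla){\bar R}^{(0,0)} + \bar{\cal Q}_{(0,0)}^{abcd}R^{(0,0)}_{cd}{\bar R}^{(0,0)}$), which matches \eqref{eq:lem_scalar} with the principal part \eqref{eq:sc_principal}.

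For the inductive step, I would write the variation of ${\bar S}_n = \int \bepsilon\, R\,\Box^n R$ schematically as
\begin{align*}
\delta {\bar S}_n = \int \bepsilon\Bigl\{\tfrac12 R\,\Box^n R\, g^{ab}\delta g_{ab} + (\delta R)\,\Box^n R + R\,\delta(\Box^n R)\Bigr\}\,,
\end{align*}
and then isolate one power of $\Box$: setting $Y \equiv {\bar R}^{(0,n-1)} = \Box^{n-1}R$, I would use the same expansion as \eqref{eq:varBoxX}, namely $\delta \Box Y = \Box \delta Y - \nabla^c\nabla^d Y\,\delta g_{cd} + (\text{terms of the form } Y\nabla\delta\Gamma,\ \nabla Y\,\delta\Gamma)$. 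Integrating by parts to move derivatives off $\delta g$, the explicitly-retained terms reproduce the differential operator acting on $R$ in the equation of motion ${\bar E}^{ab}_{(n-1)}$ of ${\bar S}_{n-1}$ — but now acting on $\Box R = {\bar R}^{(0,1)}$ instead of $R$ — plus new terms whose coefficients involve $\nabla^c\nabla^d Y = {\bar R}^{(2,n-1)}$, $\nabla Y = {\bar R}^{(1,n-1)}$, $Y = {\bar R}^{(0,n-1)}$ contracted with metric factors. These new terms are precisely of the form ${\bar{\cal P}}^{ab}_{(n,n-1)}(\nabla){\bar R}^{(0,n-1)}$ and $\bar{\cal Q}_{(n,n-1)}^{abcd}R^{(0,1)}_{cd}{\bar R}^{(0,n-1)}$ in the statement. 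Invoking the induction hypothesis on the structure of ${\bar E}^{ab}_{(n-1)}$ then gives the claimed form of ${\bar E}^{ab}_{(n)}$ with ${\bar{\cal P}}^{ab}_{(n,k)} = {\bar{\cal P}}^{ab}_{(n-1,k)}$ for $k\le n-1$, and the principal part \eqref{eq:sc_principal} is inherited unchanged since, on the scalar $\Box^n R$, applying one more $\Box$ to the top-order piece $2(\nabla^a\nabla^b - g^{ab}\Box){\bar R}^{(0,n-1)}$ just raises the index $q$ by one.

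The main obstacle, as in Lemma \ref{lem:reg_1}, is bookkeeping: one must check that the curvature-valued coefficients produced when commuting $\Box$ past $\nabla$'s (via the Ricci identity) and when expanding $\delta\Gamma$ in terms of $\nabla\delta g$ never generate a factor ${\bar R}^{(p,q)}$ with $q \geq n$ — i.e., that the top weight $q=n$ appears only linearly and only in the principal operator \eqref{eq:sc_principal}. This follows because each Ricci-identity commutator trades one derivative-of-$\Box^{n-1}R$ for a Riemann (equivalently Weyl plus Ricci) factor times a lower-$q$ object, and the bound $\lceil n/2\rceil \leq k$ in the sum is exactly the statement that the two curvature factors share the $n$ boxes. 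I would relegate the explicit index manipulations to a remark and note that they are structurally identical to those already carried out for \eqref{eq:varBoxX}--\eqref{eq:E_n}, the only simplification being that the scalar $R$ carries no free indices so ${\cal D}_X^\dag$ is replaced by the simpler $(g^{ab}\Box - \nabla^a\nabla^b)$-type operator and its iterates.
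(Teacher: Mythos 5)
Your proposal is correct and follows essentially the same route as the paper's own proof: induction on $n$, peeling off one $\Box$ from $\Box^n R$ via $Y\equiv{\bar R}^{(0,n-1)}$, expanding $\delta\Box Y$ as in \eqref{eq:varBoxX}, and observing that the inherited differential operator now acts on ${\bar R}^{(0,1)}$ so the induction hypothesis applies. The only blemish is a sign slip in your quoted formula for $\delta R$ (the derivative part should read $(\nabla^a\nabla^b-g^{ab}\Box)\delta g_{ab}$ to be consistent with the principal part \eqref{eq:sc_principal} you correctly recover), which does not affect the structural argument.
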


\noindent
{\bf Remark.} Lemmas \ref{lem:reg_1}--\ref{lem:scalar} imply that any Ricci flat metric is a solution of the regularising theory.

\begin{proof}
    The proof is very similar to that of Lemma \ref{lem:reg_1}. We employ induction on $n$. Variation of \eqref{eq:scalarS_n} with $n=0$ gives
    \begin{equation}
        {\bar E}^{ab}\equiv 2\,\nabla^a\nabla^b R-2\,g^{ab} \Box R+R^2 \,g^{ab}-2\,R\, R^{ab}\,,
    \end{equation}
    which is the required structure for $n=0$.
    
    \noindent
    Next, assume that the equation of motion of ${\bar S}_{n-1}$ has the required structure. Varying ${\bar S}_{n-1}$ (and letting $Y\equiv{\bar R}^{(0,n-1)}$) gives
    \begin{align*}
        \delta {\bar S}_{n-1}&=\int \bepsilon\Bigl\{\frac12 R \,Y g^{ab}\delta g_{ab}+({\cal D}_R\delta g) Y+R({\cal D}_Y \delta g) \Bigr\} \\
        &=\int \bepsilon\Bigl\{\frac12 R\, Y g^{ab}+({\cal D}_{\bar R}^\dag Y)^{ab}+({\cal D}_Y^\dag R)^{ab}\Bigr\} \delta g_{ab}\,,
    \end{align*}
where ${\cal D}_{\bar R}$ and ${\cal D}_Y$ are differential operators such that $\delta R=({\cal D}_{\bar R})^{ab}\delta g_{ab}$ and $\delta Y=({\cal D}_Y)^{ab}\delta g_{ab}$, ${\cal D}_R^\dag$ and ${\cal D}_Y^\dag$ denotes the formal adjoints of these differential operators. The equation of motion can then be formally written as
\begin{align}
    {\bar E}^{ab}_{(n-1)}&=\frac12 R\, Y g^{ab}+({\cal D}_{\bar R}^\dag Y)^{ab}+({\cal D}_Y^\dag R)^{ab} \nonumber \\
    &=\left[\frac12 R g^{ab}\Box^{n-1}+({\cal D}_{\bar R}^\dag)^{ab} \Box^{n-1}+ ({\cal D}_Y^\dag)^{ab} \right]R\,. \label{eq:barE_n-1}
\end{align}
By the induction hypothesis this equation has the form described in the statement of the lemma. Next we vary $S_n$:
    \begin{align*}
        \delta S_{n}&=\int \bepsilon\Bigl\{\frac12 R\Box Y g^{ab}\delta g_{ab}+({\cal D}_{\bar R})^{ab}\delta g_{ab}\Box Y+R\delta \Box Y \Bigr\} \,.
    \end{align*}
The variation of $\Box Y$ is given by
\begin{equation}
    \delta \Box Y=\Box \delta Y-\text{ terms with $Y\nabla \delta \Gamma$ and $\nabla Y \delta \Gamma$} \,.
\end{equation}
Integrating by parts to transfer derivatives from $\delta g$ to $R$ and $Y$ produces terms of the same form as $\bar{\cal P}^{ab}_{(n,n-1)}(\nabla){\bar R}^{(0,n-1)}$ in the statement of the lemma. Then we have
    \begin{align*}
             \delta {\bar S}_{n}&=\int \bepsilon \Bigl\{\frac12 R\Box Y g^{ab}+({\cal D}_{\bar R}^\dag)^{ab} \Box Y+({\cal D}_Y^\dag)^{ab} \Box R+\ldots\Bigr\} \delta g_{ab}\,,
    \end{align*}
and thus
\begin{align}
        {\bar E}^{ab}_{(n)}&=\frac12 R\Box Y g^{ab}+({\cal D}_{\bar R}^\dag)^{ab} \Box Y+({\cal D}_Y^\dag)^{ab} \Box R+\ldots \nonumber \\
        &=\left[\frac12 R g^{ab}\Box^{n-1}+({\cal D}_{\bar R}^\dag) \Box^{n-1}+({\cal D}_Y^\dag)^{ab} \right]{\bar R}^{(0,1)}+\ldots \,.
\end{align}
The terms suppressed by the ellipsis are the terms with the structure of $\bar{\cal P}^{ab}_{(n,n-1)}(\nabla){\bar R}^{(0,n-1)}$ discussed above. Similarly to the proof of Lemma \ref{lem:reg_1} we notice that the differential operator in the square brackets acting on $\bar{R}^{(0,1)}$ is the same as the differential operator acting on $R$ in \eqref{eq:barE_n-1} and hence the explicitly displayed term has the required structure, closing the induction loop.
\end{proof}

\subsection{Constraint propagation}

To complete the proof of the well-posedness part of Theorem \ref{thm:short}, we provide details below on how to obtain a system of homogeneous linear wave equations for the constraint variables. 

The contracted Bianchi identity implies (see e.g., \cite{Wald:1984rg})
    \begin{equation}
        g^{\alpha\beta}\partial_\alpha \partial_\beta \Gamma^\mu+\mathbb{L}(g,\partial)^\mu{}_\nu\Gamma^\nu=2{\cal B}^{(0)\mu} \,,
    \end{equation}
    where $\mathbb{L}$ is a first order differential operator depending on the metric and its partial derivatives up to second order. 
    
    \noindent
    To obtain evolution equations for ${\cal R}^{(p,q)}$ and ${\cal W}^{(k)}$, consider the wave equations for ${R}^{(p,q)}$ and ${W}^{(k)}$ derived in Lemma \ref{lem:waveWR}. On the one hand, these equations are part of the evolution system \eqref{eq:reduced_wave} when written in terms of the fields ${R}^{(p,q)}$ and ${W}^{(k)}$. On the other hand, these equations also hold as geometric identities when expressed in terms of derivatives of the Ricci tensor and the Weyl tensor. Now take the difference of the respective evolution equation and geometric identity for each of ${R}^{(p,q)}$, ${W}^{(k)}$. These equations will have the following form:
    \begin{align}
        \Box {\cal R}^{(p,q)}&={\cal F}_R^{(p,q)}\,, \\
        \Box {\cal W}^{(k)}&={\cal F}_W^{(k)}\,,
    \end{align}
    where ${\cal F}_R^{(p,q)}$ and ${\cal F}_W^{(k)}$ can be written as a sum of terms of the form
    \begin{equation}
      M_i(\mathcal{T})-M_i(\mathcal{S})\,, \label{eq:monomial_diff} 
    \end{equation}
    where $M_i$ is a monomial built from factors of its arguments (and contractions with respect to the metric), ${\cal T}$ stands for the tensor fields $R^{(p,q)}$, $W^{(k)}$ and ${\cal S}$ stands for the tensor fields $\nabla_{c_1}\ldots \nabla_{c_p} \Box^q R_{ab}$, $\nabla_{e_1}\ldots \nabla_{e_k} W_{abcd}$. We claim that any expression of the form \eqref{eq:monomial_diff} can be written as
    \begin{equation}
    M_i({\cal T})-M_i({\cal S})=\sum\limits_k A_{i,k}({\cal T},{\cal S}) \left({\cal T}_k-{\cal S}_k\right) \,. \label{eq:monomial_diff_2}
    \end{equation}    
    (Note that $\left({\cal T}_k-{\cal S}_k\right)$ is equal to one of ${\cal R}^{(p,q)}$ or ${\cal W}^{(k)}$.) This claim can be verfied by e.g., using induction on the number of factors in $M_i$. For a monomial containing one factor, the statement is obviously true. Suppose that \eqref{eq:monomial_diff_2} holds for any monomial built from a product of $r>1$ factors. Now consider a monomial $M$ built from a product of $r+1$ factors. Then we have
    \begin{align}
        M({\cal T})-M(\cal S)&={\cal T}_k N_k({\cal T})-{\cal S}_k N_k({\cal S}) \nonumber \\
        &=\left({\cal T}_k-{\cal S}_k\right) N_k({\cal T})+\left(N_k({\cal T})- N_k({\cal S})\right){\cal S}_k \,, \label{eq:monomial_ind}
    \end{align}
    where $N_k$ is a monomial expressed as a product of $r$ factors. Due to the induction hypothesis the second term in \eqref{eq:monomial_ind} can be expressed in the form \eqref{eq:monomial_diff_2}, thus closing the induction loop. It follows that ${\cal F}_R^{(p,q)}$ and ${\cal F}_W^{(k)}$ can be written as homogeneous linear expressions in ${\cal R}^{(p,q)}$ and ${\cal W}^{(k)}$.

    \noindent
    Finally, we consider the propagation equations for the Bianchi constraints. First, we act with a $\Box$ operator on both sides of equations \eqref{eq:Bianchi} (for each $l$). Next, we commute the $\Box$ operator past the covariant derivatives as explained in the proof of Lemma \ref{lem:grad_G}. For $l<n$ this simply gives (noting that $\Box I_\mu^{(l)}=I_\mu^{(l+1)}$ due to the evolution equations for $W^{(k)}$ and $R^{(p,q)}$)
    \begin{align}
        \Box {\cal B}_\mu^{(l)}&={\cal B}_\mu^{(l+1)}\,.
    \end{align}
    For ${\cal B}^{(n)}$, we note that the equation of motion $E_{\text{aug}}^{ab}$ of the augmented theory (obtained by varying the action) satisfies the off-shell identity
    \begin{equation}
        \nabla_\nu E^{\mu\nu}_{\text{aug}}=0 \,,\label{eq:diff_id}
    \end{equation}
    since $E_{\text{aug}}$ comes from a diffeomorphism-invariant action. Let
    \begin{equation}
        \mathbb{E}_{\mu\nu}\equiv -\Box R^{(0,n)}_{\mu\nu}+F_{\mu\nu}\,,
    \end{equation}
    denote the evolution equation for $R^{(0,n)}$ used in the augmented theory. This equation is related to $E_{\text{aug}}$ through
    \begin{equation}
        E_{\text{aug}}^{\mu\nu}+{\cal S}^{\mu\nu}=P^{\mu\nu\alpha\beta}\left(\mathbb{E}_{\alpha\beta}-\nabla_\alpha {\cal B}^{(l)}_\beta\right)\,, \label{eq:aug_to_ev}
    \end{equation}
    where $P$ was defined in \eqref{eq:defP} and ${\cal S}^{\mu\nu}$ is a sum of monomials such that each monomial is a product of factors of the metric, $\bepsilon$ (in parity-violating theories), $R^{(p,q)}$, $W^{(k)}$ and exactly one factor of a constraint variable such that (c.f. the argument following \eqref{eq:monomial_diff})
    \begin{equation}
        E_{\text{aug}}^{\mu\nu}[R,W, \nabla R, \nabla W, \ldots]+{\cal S}^{\mu\nu}=E_{\text{aug.}}^{\mu\nu}[R^{(p,q)},W^{(k)}]\,, \label{eq:aug_convert}
    \end{equation}
    i.e., the addition of ${\cal S}_{\mu\nu}$ converts $E_{\text{aug}}^{\mu\nu}$ written in terms of the curvature tensors and their derivatives to $E_{\text{aug}}^{\mu\nu}$ written in terms of the dynamical variables $R^{(p,q)}$, $W^{(k)}$. Taking a gradient of \eqref{eq:aug_to_ev} and using \eqref{eq:diff_id} then gives
    \begin{equation}
        \Box {\cal B}^{(n)}_\mu+R_\mu{}^\nu {\cal B}^{(n)}_\nu={\cal F}_\mu^{(n)}\,,
    \end{equation}
where ${\cal F}_\mu^{(n)}=-\nabla^\nu{\cal S}_{\mu\nu}$ is a homogeneous linear expression in the constraint variables. This concludes the derivation of the evolution equations for the constraints.

\end{document}